\def\dyn{{d}}
\def\stat{{s}}
\def\SNRaEff{\rho_\dyn}
\def\SNRbEff{\rho_\stat}
\def\Expt{\mathbb E}
 \def\Tr{\text{tr}}
\def\Nn{N_\dyn}
\def\Nc{N_\stat}
\def\NTx{M}
\def\tr{\text{tr}}
\def\Heqd{\widetilde{\bH}_\dyn}
\def\Heqs{\widetilde{\bH}_\stat}
\def\vec{\mathbf {vec}}
\def\defeq{\stackrel{\triangle}{=}}
\begin{document}

\title{Coherent Product Superposition for Downlink Multiuser MIMO }
\author{Yang Li, {\em Student Member, IEEE}, and Aria Nosratinia, {\em
    Fellow, IEEE}
  \thanks{The authors are with the Department of
    Electrical Engineering, University of Texas at Dallas, Richardson,
  TX 75080, USA, email: aria@utdallas.edu,
  yang@utdallas.edu}
}

\maketitle

\begin{abstract}
In a two-user broadcast channel where one user has full CSIR and the
other has none, a recent result showed that TDMA is strictly
suboptimal and a product superposition requiring non-coherent
signaling achieves DoF gains under many antenna configurations.
This work introduces product superposition in the domain of coherent
signaling with pilots, demonstrates the advantages of product
superposition in low-SNR as well as high-SNR, and establishes DoF
gains in a wider set of receiver antenna configurations. Two
classes of decoders, with and without interference cancellation, are
studied. Achievable rates are established by analysis and illustrated
by simulations.
\end{abstract}

\begin{keywords}
CSIR, superposition, degrees of freedom, pilot, channel estimation
\end{keywords}

\section{Introduction}
\label{sec:Introduction}

Due to varying mobility and the effects of the propagation
environment, wireless network nodes often have unequal capability to
acquire CSIR (channel state information at receiver). Downlink
(broadcast) transmission to nodes with unequal CSIR is therefore a
subject of practical interest.

It has been known that if all downlink users have full CSIR, then
orthogonal transmission (e.g. TDMA) achieves the optimal degrees of
freedom (DoF)~\cite{Caire2003, Huang2009}, in the absence of CSIT under fast fading. A
similar result is known to hold for certain antenna configurations in
the absence of CSIR. Recently it was discovered~\cite{Li-TIT2012-2}
that a very different behavior emerges when one user has perfect CSIR
and the other has none: in this case TDMA is highly suboptimal and a
product superposition can achieve gains in the degrees of freedom
(DoF). However, this result~\cite{Li-TIT2012-2} required non-coherent
Grassmannian signaling while most practical systems use pilots and
employ coherent detection after channel estimation. In addition, the
result~\cite{Li-TIT2012-2} was limited to high-SNR and did not
demonstrate optimality in all receiver antenna configurations.

In this paper we extend the product superposition to coherent
signaling with pilots. 
This is motivated by several factors, among them the popularity and
prevalence of coherent signaling in the practice of wireless
communications, as well as the known results in the point-to-point
channel~\cite{Zheng2002} showing that pilot-based transmission can
perform almost as well as Grassmannian signaling. We show that a
similar result holds in the mixed-mobility broadcast channel.  
In the process, we demonstrate the DoF gains of product
superposition for more antenna configurations than
in~\cite{Li-TIT2012-2}, and in addition show that it has excellent
performance in low-SNR as well as high-SNR.

A downlink scenario with two users is considered in this paper, where
one user has a short coherence interval and is referred to as the {\em
  dynamic user}, and the other has a long coherence interval and is
referred to as the {\em static user}.  The main results of this paper
are as follows.

\begin{itemize}
\item 

We propose a new signaling structure that is a product of two matrices
representing the signals of the static and dynamic user,
respectively, where the data for both users are transmitted using coherent
signaling.

\item
We propose two decoding methods.  The first method performs no
interference cancellation at the receiver.  We show that under this
method, at both high SNR and low SNR, the dynamic user experiences
almost no degradation due to the transmission of the static
user. Therefore in the sense of the cost to the other user, the static
user's rate is added to the system ``for free.''  Avoiding
interference cancellation gives this method the advantage of
simplicity.

\item 
The second method further improves the static user's rate by allowing
it to decode and remove the dynamic user's signal. This increases the
effective SNR for the static user and provides further rate gain.

\item
We show that the product superposition has DoF gains when the dynamic
user has either more, less or equal number of antennas as the static
user. Previously~\cite{Li-TIT2012-2} the DoF gain was
demonstrated only when the dynamic user had fewer or equal number of
antennas compared with the static user.

\end{itemize}

The following notation is used throughout the paper: for a matrix
$\bA$, the transpose is denoted with $\bA^t$, the conjugate transpose
with $\bA^H$, the pseudo inverse with $\bA^{\dag}$ and the element in
row $i$ and column $j$ with $[\bA]_{ij}$.  The $k\times k$ identity
matrix is denoted with $\bI_k$. The set of $n\times m$ complex
matrices is denoted with $\mathcal{C}^{n\times m}$. We denote
$\mathcal{CN}(0,1)$ as the circularly symmetric complex Gaussian
distribution with zero mean and unit variance. For all variables
the subscripts ``s'' and ``d'' stand as mnemonics for ``static'' and
``dynamic'', respectively, and subscripts ``$\tau$'' and ``$\delta$''
stand for ``training'' and ``data.''

\section{System Model and Preliminaries}
\label{sec:Preliminaries}

We consider an $\NTx$-antenna base-station transmitting to two users,
where the dynamic user has $\Nn$ antennas and the static user has
$\Nc$ antennas. The channel coefficient matrices of the two users are
$\bH_\dyn\in\mathcal{C}^{\Nn\times \NTx}$ and
$\bH_\stat\in\mathcal{C}^{\Nc\times \NTx}$, respectively. In this
paper we restrict our attention to $\NTx=\max\{\Nn,\Nc\}$. The system
operates under block-fading, where $\bH_\dyn$ and $\bH_\stat$ remain
constant for $T_\dyn$ and $T_\stat$ symbols, respectively, and change
independently across blocks. The coherence time $T_\dyn$ is small but
$T_\stat$ is large ($T_\stat\gg T_\dyn$) due to different
mobilities. The difference in coherence times means that the channel
resources required by the static user to estimate its channel are
negligible compared to the training requirements of the dynamic
user. To reflect this in the model, it is assumed that $\bH_\stat$ is
known by the static user (but unknown by the dynamic user, naturally),
while $\bH_\dyn$ is not known {\em a priori} by either user.

\begin{figure}
\centering \includegraphics[width=2.5in]{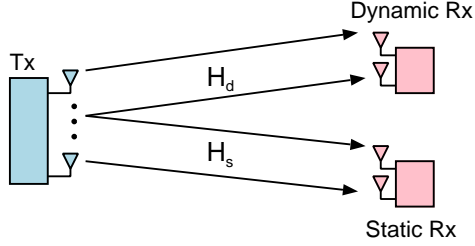}
\caption{Channel model.}
\label{fig:Channel_Model}
\end{figure}

Over $T_\dyn$ time-slots (symbols) the base-station sends
$\bX=[\mathbf{x}_1,\cdots,\mathbf{x}_M]^t$ across $\NTx$ antennas,
where $\mathbf{x}_i\in \mathcal{C}^{T_\dyn \times 1}$ is the signal vector
sent by the antenna $i$. The signal at the dynamic and static users is
respectively
\begin{align}
\bY_\dyn &= \bH_\dyn \bX + \bW_\dyn, \nonumber \\
\bY_\stat & = \bH_\stat \bX + \bW_\stat, \label{eq:channelmodel}
\end{align}
where $\bW_\dyn\in \mathcal{C}^{\Nn \times T_\dyn}$ and $\bW_\stat\in
\mathcal{C}^{\Nc \times T_\dyn}$ are additive noise with i.i.d. entries
$\mathcal{CN}(0,1)$. Each row of $\bY_\dyn\in \mathcal{C}^{\Nn \times T_\dyn}$
(or $\bY_\stat\in \mathcal{C}^{\Nc \times T_\dyn}$) corresponds to the received
signal at an antenna of the dynamic user (or the static user) over $T_\dyn$
time-slots.  The base-station is assumed to have an average power
constraint $\rho$
\begin{equation}
\Expt\big[\sum_{i=1}^{\NTx}\tr(\mathbf{x}_i\mathbf{x}_i^{H})\big] =
\rho \,T_\dyn. \label{eq:powerconstraint}
\end{equation}


The channels $\bH_\dyn$ and $\bH_\stat$ have
i.i.d. entries with the distribution $\mathcal{CN}(0,1)$. We assume
$\NTx = \max(\Nn,\Nc)$ and $T_\dyn\ge 2\Nn$~\cite{Zheng2002}.

\subsection{The Baseline Scheme}
\label{sec:baseline}

We start by establishing a baseline scheme and outlining its capacity
for the purposes of comparison. In our system model, MIMO transmission
schemes involving dirty paper coding, zero-forcing, or similar
techniques~\cite{Weingarten2006,Jindal2006,Yoo2006,Sharif2005} are not
applicable since $\bH_\dyn$ varies too quickly for feedback to
transmitter. Our baseline method uses orthogonal
transmission, i.e., TDMA.

For the dynamic user, we consider the following near-optimal
method. The base-station activates only $\Nn$ out of $\NTx$
antennas~\cite{Zheng2002}, sends an orthogonal pilot matrix
$\bS_{\tau}\in\mathcal{C}^{\Nn\times\Nn}$ during the first $\Nn$
time-slots, and then sends
i.i.d. $\mathcal{CN}(0,1)$ data signal $\bS_{\delta}\in
\mathcal{C}^{\Nn\times (T_\dyn-\Nn)}$ in the following $T_\dyn-\Nn$
time-slots~\cite{Hassibi2003}, that is
\begin{equation}
  \bX = \bigg[\sqrt{\frac{\rho_{\tau}}{\Nn} } \,\bS_{\tau} \;
    \sqrt{\frac{\rho_{\delta}}{\Nn}}\,\bS_{\delta}\bigg]
\end{equation}
where 
$\bS_{\tau}\bS_{\tau}^H = \Nn\bI$, and $\rho_{\tau}$ and
$\rho_{\delta}$ are the average power used for training and data,
respectively, and satisfy the power constraint
in~\eqref{eq:powerconstraint}:
\begin{equation}
\rho_{\tau} \Nn + \rho_{\delta}(T_\dyn-\Nn) \le \rho T_\dyn. \label{eq:powerconstraint1}
\end{equation}
The dynamic user employs a linear minimum-mean-square-error (MMSE)
estimation on the channel. The normalized channel estimate obtained in
this orthogonal scheme is denoted
$\overline{\bH}_\dyn\in\mathcal{C}^{\Nn\times \Nn}$.
Under this condition, the rate attained
by the dynamic user is~\cite{Hassibi2003}:
\begin{equation}
R_\dyn \ge (1-\frac{\Nn}{T_\dyn}) \Expt \big[\log\det(\bI_{\Nn} +
  \frac{\SNRaEff}{\Nn} \overline{\bH}_\dyn \overline{\bH}_\dyn^{H})\big],  \label{eq:R10}
\end{equation}
where $\SNRaEff$ is the effective signal-to-noise ratio (SNR)
\begin{equation}
\SNRaEff = \frac{\rho_{\delta}\,\rho_{\tau}}{1+\rho_{\delta} + \rho_{\tau}\Nn}.\label{eq:rho10}
\end{equation}

For the static user, the channel is known at the receiver, the
base-station sends data directly using all $M$ antennas. The rate
achieved by the static user is~\cite{Telatar1999}
\begin{equation}
R_\stat = \Expt \bigg[\log\det\bigg(\bI_{\Nc} +
  \frac{\rho}{\Nc} \, \bH_\stat \bH_\stat^{H}\bigg)\bigg]. \label{eq:R20}
\end{equation}

Time-sharing ($0\le p\le 1$) between $R_\dyn$ and $R_\stat$ yields the rate region
\begin{equation}
\mathcal{R}_{OT} = \big(pR_\dyn,\, (1-p)R_\stat \big). \label{eq:TDMA}
\end{equation}

\subsection{Overview of Product Superposition~\protect\cite{Li-TIT2012-2} }
\label{sec:overview}

In~\cite{Li-TIT2012-2}, a product superposition based on Grassmannian
signaling was proposed and  shown to achieve significant gain in DoF
over orthogonal transmission. 
%
In the so-called 
{\em Grassmannian-Euclidean superposition}~\cite{Li-TIT2012-2}, the base-station transmits
\begin{equation}
\bX=\bX_\stat\bX_\dyn \in\mathcal{C}^{M\times T_\dyn}
\end{equation}
over $T_\dyn$ time-slots, where $\bX_\dyn\in\mathcal{C}^{\Nn\times T_\dyn}$ and
$\bX_\stat\in\mathcal{C}^{M\times\Nn}$ are the signals for the dynamic
and static user, respectively. 
For the dynamic user, a Grassmannian (unitary) signal is used to
construct $\bX_\dyn$, so that information is carried only in the
subspace spanned by the rows of  $\bX_\dyn$.
As long as  $\bX_\stat$ is full rank, its multiplication does not create
interference for the dynamic user, since $\bX_\stat\bX_\dyn$ and
$\bX_\dyn$ span the same row-space. 

The static user decodes and peels off $\bX_\dyn$ from the received
signal, then decodes $\bX_\stat$, which carries information in the usual
manner of space-time codes.

In conventional point-to-point non-coherent methods~\cite{Zheng2002, Brehler2001},
power gain is obtained at low-SNR and yet no DoF gain is achieved. Compared with 
these method, the product superposition attains DoF gain by transmitting to 
two users.
\section{Pilot-Based Product Superposition}
\label{sec:PBPS}

We now develop a product superposition with coherent signaling for the
two-user broadcast channel.  We start with a simple method with
single-user decoding (no interference cancellation).

\subsection{Signaling Structure}

Over $T_\dyn$ symbols (the coherence interval of the dynamic
user) the base-station sends $\bX\in\mathcal{C}^{M\times T_\dyn}$
across $\Nc$ antennas:
\begin{equation}
\bX= \bX_\stat \bX_\dyn, \label{eq:Tx1}
\end{equation}
where $\bX_\stat\in \mathcal{C}^{M\times \Nn}$ is the data matrix for
the static user and has i.i.d. $\mathcal{CN}(0,1)$ entries. The signal
matrix $\bX_\dyn \in \mathcal{C}^{\Nn\times T_\dyn}$ is intended for the dynamic user
and consists of the data matrix $\bX_{\delta} \in \mathcal{C}^{\Nn\times
  (T_\dyn-\Nc)}$ whose entries are i.i.d. $\mathcal{CN}(0,1)$ and the
pilot matrix $\bX_{\tau} \in \mathcal{C}^{\Nn\times \Nc}$ which is
{\em unitary}, and is known to both static and dynamic users. 
\begin{equation}
\bX_\dyn = \bigg[ \sqrt{c_{\tau}}\;\bX_{\tau}\; \sqrt{c_{\delta}}\;\bX_{\delta}\bigg], \label{eq:X1}
\end{equation}
where the constant $c_{\tau}$ and $c_{\delta}$ satisfy the power
constraint~\eqref{eq:powerconstraint}:
\begin{equation}
\Nc\Nn \big( c_{\tau} + (T_\dyn-\Nn)c_{\delta} \big) \le \rho
\,T_\dyn \label{eq:powerconstraint2}.
\end{equation}

Please make note of the normalization of pilot and data
matrices in the product superposition: The pilot matrix is unitary,
i.e., the entire pilot power is normalized, while the data matrix is
normalized per time per antenna. This is only for convenience of
mathematical expressions in the sequel; full generality is maintained
via multiplicative constants $c_{\delta}$ and $c_\tau$.

A sketch of the ideas involved in the decoding at the dynamic and static
users is as follows. The
signal received at the dynamic user is
\begin{equation}
\bY_\dyn = \bH_\dyn\bX_\stat \bigg[\sqrt{c_{\tau} }\bX_{\tau} \; \sqrt{c_{\delta}}\bX_{\delta}\bigg] +
\bW_\dyn
\end{equation}
where $\bW_\dyn$ is the additive noise. The dynamic
user uses the pilot matrix to estimate the equivalent channel
$\bH_\dyn\bX_\stat$, and then decodes $\bX_{\delta}$ based on the channel
estimate.

For the static user, the signal received during the first $\Nn$
time-slots is
\begin{equation}
\bY_{\stat 1 } = \sqrt{c_{\tau} }\;\bH_\stat \bX_\stat \bX_{\tau} +
\bW_{\stat 1}
\end{equation}
where $\bW_{\stat 1}$ is the additive noise at the static user during
the first $\Nn$ samples. The static user multiplies its received
signal by $\bX_{\tau}^H$ from the right and then recovers 
\footnote{The rate is assumed to be smaller than the channel capacity, so the codeword (multiple blocks of $\bX_\stat$) can be always decoded as long as it is sufficient long.}
the signal
$\bX_\stat$.

\begin{remark}
Each of the dynamic user's codewords includes pilots because it needs frequent
channel estimates. No pilots are included in the individual codewords
of the static user because it only needs infrequent channel estimate
updates. In practice static user's channel training occurs at much longer
intervals outside the proposed signaling structure.
\end{remark}

\subsection{Main Result}

\begin{theorem}
\label{thm:rate1}
Consider an $M$-antenna base-station, 
a dynamic user with $\Nn$-antennas and coherence time $T_\dyn$, and a
static user with $\Nc$-antennas and coherence time $T_\stat\gg
T_\dyn$. Assuming the dynamic user does not know its channel $\bH_\dyn$ but
the static user knows its channel $\bH_\stat$, the pilot-based product
superposition achieves the rates
\begin{align}
R_\dyn & =  (1-\frac{\Nn}{T_\dyn}) \Expt \bigg[\log\det\bigg(\bI_{\Nn} +
  \frac{\SNRaEff}{\Nn} \overline{\bH}_\dyn\overline{\bH}_\dyn^{H}\bigg)\bigg],  \label{eq:DynamicRate} \\
R_\stat & = \frac{\Nn}{T_\dyn} \,\Expt \bigg[\log\det\bigg(\bI_{\Nc} +
  \frac{\SNRbEff}{\Nc} \, \bH_\stat \bH_\stat^{H}\bigg)\bigg], \label{eq:StaticRate}
\end{align}
where $\overline{\bH}_\dyn$ is the {\em normalized} MMSE channel estimate of
the equivalent dynamic channel
$\bH_\dyn\bX_\stat$, and $\SNRaEff$ and $\SNRbEff$ are the effective
SNRs:
\begin{align}
\SNRaEff & =   \frac{c_{\tau}c_{\delta}\Nn\Nc^2}{1+c_{\tau}\Nc +  c_{\delta}\Nn\Nc},\\
\SNRbEff & = c_{\tau}\Nc.
\label{Eq:rho2}
\end{align}
\end{theorem}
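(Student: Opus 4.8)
The plan is to prove the two rates by separate achievability arguments that each reduce a user's observation to an equivalent point-to-point coherent MIMO channel, after which the bracketed $\Expt[\log\det(\cdot)]$ expressions follow from standard results. The product structure $\bX=\bX_\stat\bX_\dyn$ is what makes this decoupling possible: the static user reads its data off the pilot phase, while the dynamic user sees the matrix product $\bH_\dyn\bX_\stat$ as a single equivalent channel $\Heqd$ that it learns from the pilots. In both cases the prefactors will come from a time-fraction accounting over the length-$T_\dyn$ block, and the effective SNRs from the power split between $c_\tau$ and $c_\delta$.

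For the static user I would first isolate the pilot-phase observation $\bY_{\stat 1}=\sqrt{c_\tau}\,\bH_\stat\bX_\stat\bX_\tau+\bW_{\stat 1}$ and right-multiply by $\bX_\tau^H$. Because $\bX_\tau$ is unitary, this strips the pilot and leaves the noise white, producing $\sqrt{c_\tau}\,\bH_\stat\bX_\stat+\bW'$ with $\bW'$ having i.i.d.\ $\mathcal{CN}(0,1)$ entries. Since $\bH_\stat$ is known at the static receiver and $\bX_\stat$ is its i.i.d.\ Gaussian codeword, this is exactly a coherent MIMO channel with per-antenna data power $c_\tau$, so Telatar's formula \cite{Telatar1999} gives the bracketed term with $\SNRbEff=c_\tau\Nc$ (so that $\SNRbEff/\Nc=c_\tau$). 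The prefactor $\Nn/T_\dyn$ then reflects that, under single-user decoding, the static user communicates only over the $\Nn$ pilot symbols of each block and discards the data phase.

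For the dynamic user --- the substantive part --- I would form the equivalent channel $\Heqd=\bH_\dyn\bX_\stat$, whose entries are zero-mean with variance $\NTx$, and then right-multiply the pilot-phase output by $\bX_\tau^H$ to obtain $\sqrt{c_\tau}\,\Heqd+\bW'$. From this the receiver forms the MMSE estimate $\widehat{\bH}_\dyn$ of $\Heqd$; let the entrywise estimate and error variances be $\alpha=c_\tau\NTx^2/(1+c_\tau\NTx)$ and $\beta=\NTx/(1+c_\tau\NTx)$, read off the scalar MMSE relations. In the data phase I would write the received signal as $\sqrt{c_\delta}\,\widehat{\bH}_\dyn\bX_\delta+\big(\sqrt{c_\delta}(\Heqd-\widehat{\bH}_\dyn)\bX_\delta+\bW_{\dyn,\delta}\big)$ and treat the parenthesized residual as noise of covariance $(1+c_\delta\Nn\beta)\bI$. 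Invoking the worst-case (Gaussian) noise lower bound \cite{Hassibi2003} conditioned on $\widehat{\bH}_\dyn$, and normalizing the estimate to unit-variance entries via $\overline{\bH}_\dyn=\widehat{\bH}_\dyn/\sqrt{\alpha}$, yields the bracketed $\Expt[\log\det(\cdot)]$ with $\SNRaEff/\Nn=c_\delta\alpha/(1+c_\delta\Nn\beta)$; substituting $\alpha$ and $\beta$ collapses this to the stated $\SNRaEff$, and the prefactor $1-\Nn/T_\dyn$ is the fraction of each block carrying $\bX_\delta$.

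The main obstacle is that $\Heqd=\bH_\dyn\bX_\stat$ is a product of independent Gaussian matrices and hence non-Gaussian, whereas the channel-estimation and worst-case-noise steps are classically stated for Gaussian channels. In particular, the clean worst-case-noise bound wants the residual to be uncorrelated with the data input conditioned on $\widehat{\bH}_\dyn$, and for non-Gaussian $\Heqd$ the conditional mean of the estimation error given the estimate need not vanish. I would discharge this by relying only on second-moment structure: the data $\bX_\delta$ is independent of $(\Heqd,\widehat{\bH}_\dyn)$, so the residual is unconditionally uncorrelated with the input and has the stated covariance, which is all the lemma of \cite{Hassibi2003} requires; equivalently, one may upper-bound the data-phase conditional entropy using a fixed mismatched linear (MMSE-type) receiver whose error covariance depends on $\Heqd$ only through $\alpha$ and $\beta$, again giving a valid lower bound irrespective of higher-order statistics. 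Thus the quantities I actually need are the moments $\Expt[\Heqd]=0$, the entrywise variance $\NTx$, and the whiteness of the estimate and the error --- not the full law of $\Heqd$.
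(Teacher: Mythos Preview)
Your proposal is correct and follows essentially the same route as the paper: strip the pilot via right-multiplication by $\bX_\tau^H$ for both users, then invoke Telatar for the static user and the Hassibi--Hochwald worst-case-noise lower bound for the dynamic user after a linear MMSE estimate of the equivalent channel $\bH_\dyn\bX_\stat$, with the prefactors coming from the obvious time-fraction accounting. Your explicit discussion of why the non-Gaussianity of $\bH_\dyn\bX_\stat$ does not invalidate the bound---only uncorrelatedness and the second-moment quantities $\alpha,\beta$ are used---is a point the paper leaves implicit but is precisely the justification its proof relies on.
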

\begin{proof}
See Appendix~\ref{apdx:thm1}.
\end{proof}

For the static user, the effective SNR $\SNRbEff$
increases linearly with the power used in the training of the dynamic
user. This is because the static user decodes based on the signal
received during the training phase of the dynamic user.

For the dynamic user, the effective SNR $\SNRaEff$ is unaffected
by superimposing $\bX_\stat$ on $\bX_\dyn$. To see this,
compare~\eqref{eq:powerconstraint1}
with~\eqref{eq:powerconstraint2} to arrive at $\rho_{\tau}=c_{\tau}\Nc$
and $\rho_{\delta}=c_{\delta}\Nn\Nc$,  therefore the two SNRs are equal to
\begin{equation}
\SNRaEff = \frac{c_{\tau}c_{\delta} \Nn\Nc^2}{1+c_{\tau}\Nc +
  c_{\delta}\Nn\Nc}.  
\label{eq:SNRdynamic}
\end{equation}

Intuitively, the rate available to the dynamic user via orthogonal
transmission (Eq.~\eqref{eq:R10}) and via superposition
(Eq.~\eqref{eq:DynamicRate}) will be very similar: the normalized
channel estimate $\overline{\bH}_\dyn$ in both cases has uncorrelated
entries with zero mean and unit variance.\footnote{The dynamic channel
  estimates in the orthogonal and superposition transmissions have the
  same mean and variance but are not
  identically distributed, because in the orthogonal case,
  $\overline{\bH}_\dyn$ is an estimate of $\bH_\dyn$, a Gaussian matrix,
  while in the superposition case it is an estimate of $\bH_\dyn\bX_\stat$,
  the product of two Gaussian matrices. Therefore the expectations in
  Eq.~\eqref{eq:R10} and~\eqref{eq:DynamicRate} may produce
  slightly different results.}
Thus the product superposition achieves the static user's rate ``for
free'' in the sense that the rate for the dynamic user is
approximately the same as in the single-user scenario. In the
following, we discuss this phenomenon at low and high SNR.

\subsubsection{Low-SNR Regime}
We have $\SNRaEff, \SNRbEff\ll
1$. Let the eigenvalues of $\overline{\bH}_\dyn \overline{\bH}_\dyn^{H}$ be
denoted $\bar{\lambda}_{\dyn i}^2$, $i=1,\ldots,
\Nn$. Using~\eqref{eq:DynamicRate} and a Taylor expansion of the log
function at low SNR, the achievable rate for the dynamic user is
approximately:
\begin{align}
 R_\dyn & \approx  (1-\frac{\Nn}{T_\dyn})\frac{\SNRaEff}{\Nn}  \,\Expt\big[
   \sum_{i=1}^{\Nn} \bar{\lambda}_{\dyn i}^2 \big]  \\ 
& = (1-\frac{\Nn}{T_\dyn})\frac{\SNRaEff}{\Nn}  \,\Tr \big(\Expt[\overline{\bH}_\dyn \overline{\bH}_\dyn^{H}]\big) \\
& =  (1-\frac{\Nn}{T_\dyn})\Nn \,\SNRaEff. \label{eq:DynamicLowSNR}
\end{align}
where higher-order Taylor terms have been ignored. Similarly,
from~\eqref{eq:R10}, the baseline method achieves the rate
\begin{align}
 (1-\frac{\Nn}{T_\dyn})\Nn \, \SNRaEff.
\end{align}
Thus, the dynamic user attains the same rate as it would in the
absence of the other user and its interference, i.e., a single-user
rate.  At low SNR, one cannot exceed this performance.

The rate available to the static user at low-SNR is obtained
via~\eqref{eq:StaticRate}, as follows:
\begin{align}
R_\stat & \approx \frac{\SNRbEff}{T_\dyn}  \, \Tr\big(\Expt [\bH_\stat \bH_\stat^{H}]\big) \\ 
    & = \frac{\Nc^2\,\SNRbEff}{T_\dyn}. \label{eq:StaticLowSNR}
\end{align}

\subsubsection{High-SNR Regime}
We have $\SNRaEff, \SNRbEff\gg
1$, therefore from~\eqref{eq:DynamicRate} the achievable rate for the dynamic
user is
\begin{align}
R_\dyn &\approx (1-\frac{\Nn}{T_\dyn})\bigg(\Nn\log
\frac{\SNRaEff}{\Nn}  + \Expt \big[\sum_{i=1}^{\Nn}  \log
  \bar{\lambda}_{\dyn i}^2 \big]\bigg).
\end{align}
where the approximation follows from the dominance of the channel gain
term in the $\log\det$ capacity formula. The dynamic user attains
$\Nn(1-\Nn/T_\dyn)$ degrees of freedom, which is the maximum DoF even
in the absence of the static user~\cite{Zheng2002}. Superimposing
$\bX_\stat$ only affects the distribution of eigenvalues
$\bar{\lambda}_{\dyn i}^2$, whose impact is negligible at high-SNR.

For the static user, let the eigenvalues of $\bH_\stat \bH_\stat^{H}$ be denoted
$\lambda_{\stat i}^2$, $i=1,\ldots, \Nc$. From~\eqref{eq:StaticRate}, we
have
\begin{align}
R_\stat & \approx \frac{\Nn}{T_\dyn} \bigg( \Nc\log \frac{\SNRbEff}{\Nc} +
\Expt \big[\sum_{i=1}^{\Nc} \log \lambda_{\stat i}^2 \big]\bigg),
\end{align}
which implies that the static user achieves $\Nn\Nc/T_\dyn$ degrees of
freedom. Thus, the pilot-based product superposition achieves the
DoF obtained in~\cite{Li-TIT2012-2} for $\Nn \le \Nc$, and also for
$\Nn > \Nc$.

\subsection{Power Allocation}

The effective SNRs of the dynamic and static users depend on
$c_{\tau}$ and $c_{\delta}$. We focus on $c_{\tau}$ and $c_{\delta}$ that
maximize $R_\dyn$ (equivalently $\SNRaEff$) in a manner similar
to~\cite{Hassibi2003}. From~\eqref{eq:deltaH1} and~\eqref{eq:Peff1},
\begin{align}
\SNRaEff & = \frac{c_{\tau}c_{\delta}\Nn\Nc^2}{1+c_{\tau}\Nc +
  c_{\delta}\Nn\Nc}. \label{eq:Peff11}
\end{align}
From~\eqref{eq:powerconstraint2}, we have $c_{\tau} = \rho
T_\dyn/(\Nn\Nc) - c_{\delta}(T_\dyn-\Nn)$. Substitute $c_{\tau}$
into~\eqref{eq:Peff11}:
\begin{align}
\SNRaEff & = \frac{\Nn\Nc(T_\dyn - \Nn)}{T_\dyn-2\Nn} \cdot \frac{c_{\delta} (a - c_{\delta})}{ -c_{\delta} + b},
\end{align}
where 
\begin{align}
a & =\frac{\rho T_\dyn}{\Nn\Nc(T_\dyn-\Nn)}, \\
b & =\frac{\Nn+\rho T_\dyn}{\Nn\Nc(T_\dyn-2\Nn)}.
\end{align}
Noting that $0\le c_{\delta} \le a $, we obtain the value of $c_{\delta}$ that
maximizes $R_\dyn$:
\begin{equation}
c_{\delta}^* = b-\sqrt{b^2 - ab}, \label{eq:cd}
\end{equation}
which corresponds to 
\begin{align}
\SNRaEff^* & = \frac{\Nn\Nc(T_\dyn-\Nn)}{T_\dyn-2\Nn}
\big(2b-a-2\sqrt{b^2-ab}\big), \\
\SNRbEff^* & = \frac{\rho T_\dyn}{\Nn} - \Nc(T_\dyn-\Nn) (b-\sqrt{b^2 - ab}).
\end{align}

In the low-SNR regime ($\rho \ll 1$), we have $a\ll b$, where $b\approx \frac{\Nn}{\Nn\Nc(T_\dyn-2\Nn)}$, and use Taylor expansion:
\begin{equation}
\sqrt{b^2 - ab} \approx b \big(1 - \frac{a}{2b} - \frac{a^2}{8b^2}\big). \nonumber
\end{equation} 
We obtain
\begin{align}
\SNRaEff^* & \approx \frac{\rho^2 T_\dyn^2}{4\Nn(T_\dyn-\Nn)}\\
\SNRbEff^* & \approx \frac{\rho T_\dyn}{2\Nn}.
\end{align}
This indicates that the static user has a much larger effective SNR,
i.e., $\SNRaEff^* = o (\SNRbEff^*)$. In this case,
from~\eqref{eq:DynamicLowSNR} and~\eqref{eq:StaticLowSNR}, the
achievable rate is
\begin{align}
R_\dyn& \ge \frac{T_\dyn}{4}\rho^2, \\
R_\stat&\approx \frac{\Nc}{2} \rho.
\end{align}

In the high-SNR regime where $\rho \gg 1$ we have
\begin{align}
\SNRaEff^* & \approx \frac{\rho \, T_\dyn}{(\sqrt{T_\dyn-\Nn}-\sqrt{\Nn})^2},\\
\SNRbEff^* & \approx \frac{\rho T_\dyn(\sqrt{T_\dyn/\Nn-1}-1)}{T_\dyn-2\Nn}.
\end{align}
Both static and dynamic users attain SNR that increases linearly with
$\rho$. When $T_\dyn\gg \Nn$, for the static user, $\SNRbEff^*
\approx \rho \sqrt{T_\dyn/\Nn} \gg \SNRaEff^*$. For the dynamic
user, we have $\SNRaEff^* \approx \rho$, which is the same
SNR as if the dynamic user had perfect CSI; this is not surprising
since the power used for training is negligible when the channel is
very steady.

\begin{remark}
\label{remark:power}
In the MIMO broadcast channel, conventional transmission schemes
essentially divide the power between users. In the proposed product
superposition the transmit power works for both users simultaneously
instead of being divided between them. The training power used for the
dynamic user also carries the static user's data. In this way,
significant gains over TDMA is achieved, which is contrary to the
conventional methods that at low-SNR produce little or no gain
relative to TDMA.
\end{remark}

\begin{remark}
In~\cite{Li-TIT2012-2}, the product superposition was shown to attain
the following DoF region when $\Nn \le \Nc$, i.e., achieving the coherent outer bound~\cite{Huang2009}:
\begin{equation}
\frac{d_\dyn}{\Nn} + \frac{d_\stat}{\Nc} \le 1, \quad {d_\dyn} \le {N_\dyn}(1-\frac{N_\dyn}{T_\dyn})\nonumber
\end{equation}
where $d_\dyn$ and $d_\stat$ are the DoF of the dynamic and static user, respectively.
Note that the developments in this section make no assumption about the relative number 
of antennas at the dynamic and static receivers. One can verify that Equations~\eqref{eq:DynamicRate} 
and~\eqref{eq:StaticRate} meet the  bounds shown above for both $\Nn \le \Nc$ and $\Nn > \Nc$. 
Therefore, the achievable DoF of the product superposition is now established for all dynamic/static user antenna
configurations.
\end{remark}

\section{Improving Rates by Interference Cancellation}
\label{sec:PBPS-1}

So far no interference cancellation was performed, therefore the users
did not need to decode each other's signal. However, this had the
effect that the static user utilizes only the portion of transmit
power corresponding to the dynamic user's pilot, and not the portion
corresponding to the dynamic user's data. In this section we explore
the possibility of the static user decoding the signal of the dynamic
user.\footnote{It is not necessary for the {\em dynamic} user to decode
  the other user's signal, even if it were possible, because we have
  shown the existence of static user does not significantly affect the
  capacity to the dynamic user.} To facilitate this, we concentrate on
the case $\Nc\ge\Nn$.
The received signal at the static user is
\begin{equation}
\bY_\stat = \bH_\stat\bX_\stat [\sqrt{c_{\tau}}\,\bX_{\tau}\; \sqrt{c_{\delta}}\,\bX_{\delta}] +
\bW_\stat
\end{equation}
where $\bY_\stat \in\mathcal{C}^{\Nc\times T_\dyn}$. The static user
first estimates the product
$\bH_\stat\bX_\stat\in\mathcal{C}^{\Nc\times\Nn}$ by using the pilot
$\bX_{\tau}$ sent during the first $\Nn$ time-slots, and then it
decodes $\bX_{\delta}$. Now $\bX_\dyn$ is known, therefore the entire
observed signal at the static user can be used to decode its
message. If $\bX_{\delta}$ is decoded successfully, the static user can use the power used 
by the dynamic user data, in addition to the power used by the dynamic user pilot. Intuitively, harvesting 
additional power would improve the static user's rate relative to Section~\ref{sec:PBPS}.   

Assuming the codeword used by the dynamic user is
sufficiently long, so that the static user also experiences many
channel realizations over the dynamic user codewords. 
The rate gain produced by the interference decoding is characterized by the following theorem.

\begin{theorem}
\label{thm:rate2}
Assuming $\Nc\ge\Nn$ and sufficiently long codeword of the dynamic user, with interference decoding and cancellation, the pilot-based product
superposition achieves the following rate for the static user 
\begin{equation}
R_\stat = \frac{\Nn}{T_\dyn} \Expt\bigg[ \log \det \bigg(\bI_{\Nc} +
  \frac{\SNRbEff}{\Nc} \bH_\stat\bH_\stat^{H}\bigg)\bigg], \label{eq:StaticDecodeRate}
\end{equation}
where the effective SNR is
\begin{equation}
\SNRbEff = \frac{\Nc}{\Expt[\lambda_i^{-2}]}
\end{equation}
with $\lambda_i^{2}$ being any of the unordered eigenvalues of
$\bX_\dyn\bX_\dyn^{H}$.
\end{theorem}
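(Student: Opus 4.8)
The plan is to first identify the effective channel the static user faces after it has removed the dynamic user's signal, then decompose that channel into parallel sub-channels to obtain the exact mutual information, and finally recast that mutual information into the effective-SNR form of the theorem by a convexity (Jensen) argument. Under the theorem's hypotheses ($\Nc\ge\Nn$ and a sufficiently long dynamic codeword) the static user has already estimated $\bH_\stat\bX_\stat$ from the pilot $\bX_{\tau}$, decoded $\bX_{\delta}$, and therefore knows the entire matrix $\bX_\dyn$; this decodability is granted by the hypotheses and need not be re-established. What remains is the residual channel $\bY_\stat=\bH_\stat\bX_\stat\bX_\dyn+\bW_\stat$ in which $\bH_\stat$ and $\bX_\dyn$ are known and only the i.i.d. Gaussian data $\bX_\stat$ is unknown.

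First I would diagonalize the known right factor. Writing the singular value decomposition $\bX_\dyn=\bm{U}\bm{\Sigma}\bm{V}^{H}$ and right-multiplying $\bY_\stat$ by the unitary $\bm{V}$ (which preserves the whiteness of $\bW_\stat$), the channel separates into $\Nn$ parallel MIMO sub-channels $\mathbf{y}_l=\lambda_l\bH_\stat\mathbf{x}_l+\mathbf{w}_l$, $l=1,\dots,\Nn$, where $\lambda_l^2$ is the $l$-th eigenvalue of $\bX_\dyn\bX_\dyn^{H}$ and, by unitary invariance, $\mathbf{x}_l$ is still i.i.d. Gaussian and $\mathbf{w}_l$ is still white. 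Summing the sub-channel mutual informations and normalizing by the block length $T_\dyn$ gives the exact achievable rate
\[
R_\stat=\frac{1}{T_\dyn}\,\Expt\Big[\sum_{l=1}^{\Nn}\log\det\big(\bI_{\Nc}+\lambda_l^2\,\bH_\stat\bH_\stat^{H}\big)\Big].
\]

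Next, since the static codeword spans many independent dynamic blocks and the eigenvalues $\{\lambda_l^2\}$ are exchangeable, I would replace the sum over $l$ by $\Nn$ times the expectation over a single unordered eigenvalue $\lambda^2$, obtaining $R_\stat=\tfrac{\Nn}{T_\dyn}\Expt_{\bH_\stat,\lambda}[\log\det(\bI_{\Nc}+\lambda^2\bH_\stat\bH_\stat^{H})]$. The crux is then to pull the random $\lambda^2$ out of the $\log\det$ as the deterministic effective SNR $\SNRbEff/\Nc=1/\Expt[\lambda^{-2}]$. The key fact is that, for fixed $\bH_\stat$, the map $t\mapsto\log\det(\bI_{\Nc}+t^{-1}\bH_\stat\bH_\stat^{H})$ is convex in $t$: differentiating twice yields the second derivative $\Nc\,t^{-2}-\Tr[(t\bI_{\Nc}+\bH_\stat\bH_\stat^{H})^{-2}]$, which is nonnegative because every eigenvalue of $(t\bI_{\Nc}+\bH_\stat\bH_\stat^{H})^{-2}$ is at most $t^{-2}$. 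Applying Jensen's inequality with $t=\lambda^{-2}$ gives
\[
\Expt_\lambda\big[\log\det(\bI_{\Nc}+\lambda^2\bH_\stat\bH_\stat^{H})\big]\ \ge\ \log\det\big(\bI_{\Nc}+\tfrac{1}{\Expt[\lambda^{-2}]}\bH_\stat\bH_\stat^{H}\big),
\]
and averaging over $\bH_\stat$ produces exactly the claimed rate with $\SNRbEff=\Nc/\Expt[\lambda_i^{-2}]$, now as an achievable lower bound on the true mutual information.

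I expect the main obstacle to be this last step: recognizing that the random per-stream gains combine through their harmonic mean $1/\Expt[\lambda^{-2}]$ rather than their arithmetic mean, and establishing the convexity in the inverse variable $t=\lambda^{-2}$ so that Jensen delivers a lower (hence achievable) bound rather than an upper one. By contrast, the parallel-channel decomposition and the exchangeability reduction are routine, and, as noted, the successful cancellation of $\bX_\dyn$ is assumed in the statement, so the argument can concentrate entirely on converting the exact eigenvalue-weighted rate into the compact effective-SNR expression.
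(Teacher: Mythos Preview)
Your argument is correct and reaches the stated rate, but by a different route than the paper. The paper right-multiplies $\bY_\stat$ by $\bV_\dyn\mathbf{\Sigma}_\dyn^{-1}$ (not just $\bV_\dyn$), so the signal term becomes $\bH_\stat\bX_\stat\bU_\dyn$ with no eigenvalue weighting and all the randomness of $\lambda_l$ is pushed into the noise; it then takes the noise autocorrelation to be the block-diagonal matrix with entries $\Expt[\lambda_l^{-2}]\,\bI_{\Nc}$ and reads off the rate from the Gaussian $\log\det$ formula. That step is essentially a worst-case-Gaussian-noise (mismatched-decoder) argument across blocks, left implicit. You instead keep $\lambda_l$ on the signal side, compute the exact conditional mutual information $\sum_l\log\det(\bI_{\Nc}+\lambda_l^{2}\bH_\stat\bH_\stat^{H})$, and then lower-bound its average by Jensen via the convexity of $t\mapsto\log\det(\bI_{\Nc}+t^{-1}\bH_\stat\bH_\stat^{H})$. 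Your derivation makes the achievability (lower-bound) character of the effective-SNR expression explicit and arguably cleaner, at the minor cost of the convexity check; the paper's route is shorter but hides the passage from the random $\lambda_l^{-2}$ to $\Expt[\lambda_l^{-2}]$. One further difference: the paper does not treat the decodability of $\bX_\dyn$ at the static user as part of the hypotheses; it actually verifies it by showing that, with the same pilot-based estimate of $\bH_\stat\bX_\stat$, the static user's mutual information with $\bx_{\dyn i}$ dominates $R_\dyn$ whenever $\Nc\ge\Nn$. You should include that short argument rather than declare it ``granted,'' since the theorem's stated assumptions are only $\Nc\ge\Nn$ and a long dynamic codeword, not successful interference decoding per se.
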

\begin{proof}
See Appendix~\ref{apdx:thm2}.
\end{proof}

Compared with Theorem~\ref{thm:rate1}, the SNR for the static user is
improved by using the entire $\bX_\dyn$. To see this, we decompose
$\bX_{\delta} = \bU_{\delta}\, \diag(\gamma_1,\cdots,\gamma_{\Nn}) \,
\bV_{\delta}^{H}$, and obtain
\begin{align}
\bX_\dyn\bX_\dyn^{H} & = c_{\tau} \bI_{\Nn} + c_{\delta} \,\bU_{\delta}\,
\diag(\gamma_1^2,\cdots,\gamma_{\Nn}^2)\, \bU_{\delta}^{H} \\
& = \bU_{\delta}\,\diag( c_{\tau} + c_{\delta} \gamma_1^2,\cdots,
c_{\tau} + c_{\delta}\gamma_{\Nn}^2) \,\bU_{\delta}^H.
\end{align}
Therefore, $\lambda_i^2 = c_{\tau} + c_{\delta}\,\gamma_i^2$, for $i= 1,\ldots,\Nn$, and
\begin{align}
\SNRbEff & = \frac{\Nc}{\Expt[( c_{\tau} +
    c_{\delta}\,\gamma_1^2)^{-1}]} . 
\label{eq:StaticSNR} 
\end{align}
which is greater than the effective power available to the previous
scheme (compare with Eq.~\eqref{Eq:rho2}). So knowing the dynamic
user's data always produces a power gain.

\section{Numerical Results}

Unless specified otherwise, a power allocation is assumed ($c_{\tau}$
and $c_{\delta}$) that maximizes the rate for the dynamic user.

Figure~\ref{fig:RatePBPS} illustrates the rate for dynamic and static
users in the pilot-based product superposition, as shown in
Theorem~\ref{thm:rate1}. We consider $\Nn=2$, $\Nc=M=4$ and
$T_\dyn=5$. Numerical results correspond to the point on the rate
region where the rate of the dynamic user is optimized. This is done
to capture the corner point of the DoF region for the new scheme, and
to highlight the most significant differences between the new scheme
and the baseline scheme. At this operating point, in addition to
near-optimal rate for the dynamic user, the proposed method provides
significant rate for the static user. The degradation of the rate of
the dynamic user, compared with the baseline scheme, is negligible in
the low-SNR regime, and in the high-SNR regime the rate of the dynamic
user has the optimal degrees of freedom (SNR slope). Thus the proposed
method achieves the static user's rate almost ``for free'' in terms of
the penalty to the dynamic user.
\begin{figure}
\centering
\includegraphics[width=3.7in]{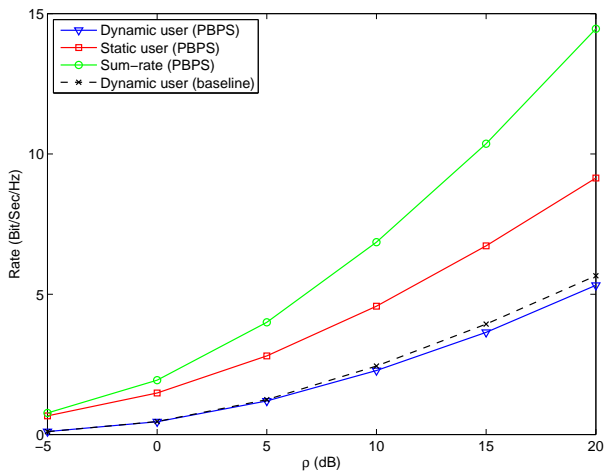}
\caption{Rate achieved by the pilot-based product superposition
  (PBPS): $\Nn=2$, $\Nc=M=4$ and $T_\dyn=5$.}
\label{fig:RatePBPS}
\end{figure}

Figure~\ref{fig:AntennaPBPS} shows the impact of the available antenna
of the static user. Here, $\rho=10$ dB, $\Nn=2$, $M=\Nc$ and
$T_\dyn=5$. The static user's rate (thus the sum-rate) increases
linearly with $\Nc$, because the degrees of freedom is
$\Nn\Nc/T_\dyn$, as indicated by Theorem~\ref{thm:rate1}. The gap of
the dynamic user's rate under the proposed method and the baseline
method vanishes as $\Nc$ increases. Intuitively, the rate difference
is because of the Jensen's loss: in the proposed method the equivalent
channel is the product matrix $\bH_\dyn\bX_\stat$ and is ``more
spread'' than the channel in the baseline method. As $\Nc$ increases,
by law of large numbers the columns of $\bX_\stat$ will become
orthonormal with probability one
($\bX_\stat\bX_\stat^{H}/\Nc\rightarrow \bI_{\Nn}$) and thus will have
a smaller impact on the distribution of $\bH_\dyn$.

\begin{figure}
\centering
\includegraphics[width=3.7in]{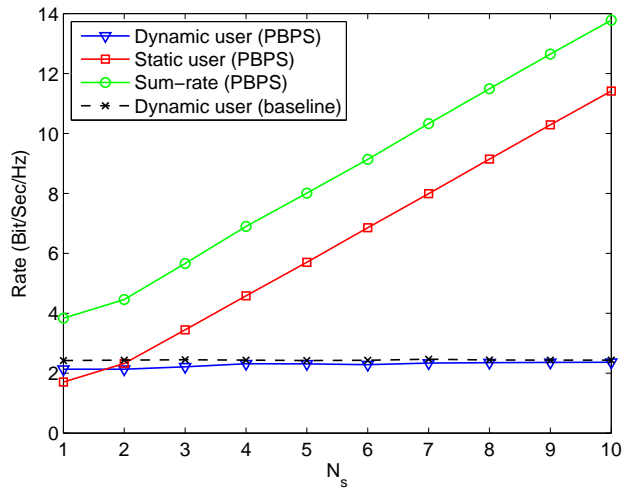}
\caption{Impact of the number of receive antennas of the static user: $\rho=10$ dB,
  $\Nn=2$, $M=\Nc$ and $T_\dyn=5$.}
\label{fig:AntennaPBPS}
\end{figure}

Figure~\ref{fig:T} demonstrates the impact of the 
coherence time of the dynamic user. Here, $\rho=10$ dB, $\Nn=2$, and
$\Nc=M=4$. As $T_\dyn$ increases, the rate for the dynamic user improves,
since the portion of time-slots (overhead) used for training is
reduced. In contrast, the rate for the static user decreases with
$T_\dyn$, because the static user transmits new signal matrix over $T_\dyn$
period. Intuitively, as $T_\dyn$ increases, the dynamic user's channel
becomes ``more static'', and therefore, the opportunity to explore its
``insensitivity'' to the channel is reduced. 

\begin{figure}
\centering
\includegraphics[width=3.7in]{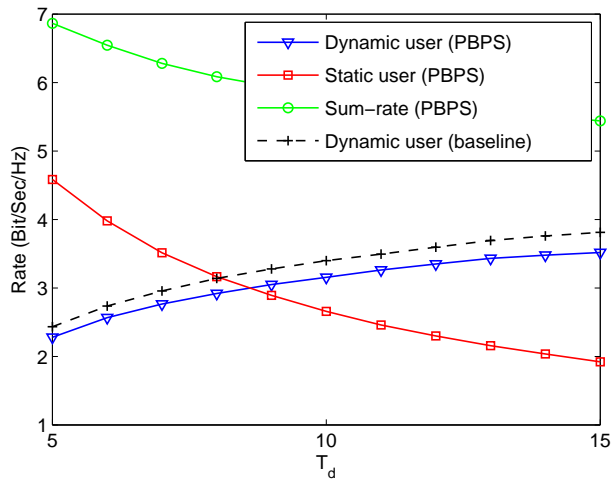}
\caption{Impact of channel coherence time: $\rho=10$ dB,
  $\Nn=2$, and $\Nc=M=4$.}
\label{fig:T}
\end{figure}

Finally, in Figure~\ref{fig:IC}, we show the gain of interference
decoding in the pilot-based product superposition, where $\Nn=2$,
$\Nc=M$ and $T_\dyn=5$. By decoding the dynamic signal , the static rate is improved
around $10\%$: the static user can now harvest the power carried not only by 
the dynamic user's pilot (the case without interference decoding) but also the dynamic user's data.
This power gain does not increase the degrees of freedom of the static user, so the slope 
of the rate under two schemes are the same.

\begin{figure}
\centering
\includegraphics[width=3.7in]{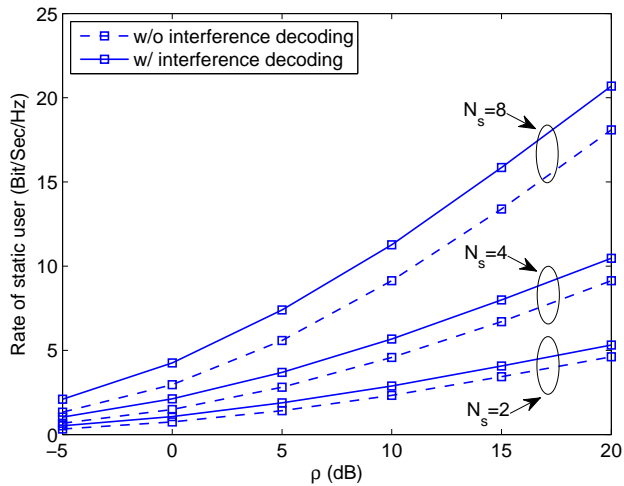}
\caption{Static user's rate with interference decoding: $\Nn=2$ and $T_\dyn=5$.}
\label{fig:IC}
\end{figure}

\section{Discussions, Extensions, and Conclusion}

In this paper, we propose and analyze a pilot-based signaling
that significantly improves the rate performance of the MIMO broadcast
channel with varying CSIR. The proposed method sends a product of two
signal matrices for the static and dynamic user, respectively, and
each user decodes its own signal in a conventional manner.  
For the entire SNR range, the static user attains
considerable rate almost without degrading the rate for the dynamic
user. The static user's rate is further improved by allowing the
static user to cancel the dynamic user's signal. 


\begin{remark}
It is possible to extend the results of this paper to more than two
receivers. The essence of the product superposition is to allow additional transmission 
for a static user when transmitting to a dynamic user. 
In case of more than two users, the static (dynamic) users can be grouped together.
At each point in time, the transmitter uses product superposition to broadcast to one selected 
user from the static group and another user from the dynamic group.

\end{remark}
\begin{remark}
\label{remark:nonergodic}
Note that throughout this paper, both users are assumed to be in an
ergodic mode of operation, i.e., the codewords are sufficiently long
to allow coding arguments to apply. Simple extensions to this setup
are easily obtained. For example, if the static user's coherence time
is very long, one may adapt the transmission rate of the static user
to its channel but allow the dynamic user to remain in an ergodic
mode. Most expressions in this paper remain the same, except that for
the rates and powers of the static user, expected values will be
replaced with constant values.
\end{remark}
\begin{remark}
As long as both users are in the ergodic mode, and the static user has
more antennas than the dynamic user, it will be able to decode and
cancel the interference caused by the dynamic user's signal. If we are
in a mode where the static user's rate is adapted to the channel (as
mentioned in Remark~\ref{remark:nonergodic} above) and the dynamic
user is in ergodic mode, then the static user may not always be able
to decode the dynamic user's data because it cannot observe enough
channel realizations to allow coding arguments to apply. In this case,
sometimes the static user may experience an ``outage'' with respect to
decoding the dynamic user's data. In this case, it can default to the
oblivious method discussed in the early part of this paper and decode
its own signal without peeling off the other user's data. The full
exploration of such extensions is the subject of future research.
\end{remark}

\begin{remark}
In each of the methods mentioned earlier in this paper, the static
user operates under an equivalent single-user channel, by inverting
either the pilot component or all components of the dynamic user's
signal. Thus, any benefits available in single-user
MIMO systems can also be available to the static user, including the benefits
arising from CSIT.  For example, water-filling can be applied to
allocate power across multiple eigen-modes of the static
user. However, this will change the effective channel seen by the
dynamic user, thus complicating the analysis. The full analysis of
this scenario is the subject of future research.
\end{remark}

\section*{Acknowledgment}

The authors gratefully acknowledge the valuable help of Mr. Mohamed Fadel.

\appendices 

\section{Proof Of Theorem~\ref{thm:rate1}}
\label{apdx:thm1}
\subsection{Rate of the Static User}
During the first $\Nn$ time-slots, the static user receives
\begin{equation}
\bY_{\stat 1 } = \sqrt{c_{\tau} }\;\bH_\stat \bX_\stat \bX_{\tau} +
\bW_{\stat 1}.
\end{equation}
Because the static user knows $\bX_{\tau}$, it removes the impact of
$\bX_{\tau}$ from $\bY_{2\tau}$:
\begin{align}
\bY_{\stat 1}^{\prime} & = \bY_{\stat 1} \bX_{\tau}^{H} \\
& = \sqrt{c_{\tau}}\;\bH_\stat \bX_\stat  + \bW_{\stat 1}^{\prime}
\end{align}
where $\bY_{\stat 1} \in \mathcal{C}^{\Nc\times \Nn}$ and
$\bW_{\stat 1}^{\prime}$ is the equivalent noise whose entries remain
i.i.d. $\mathcal{CN}(0,1)$. Therefore, the channel seen by the static
user becomes a point-to-point MIMO channel. Let $\by_{\stat i}^{\prime}$
and $\bx_{\stat i}$ be the column $i$ of $\bY_{\stat 1}^{\prime}$ and
$\bX_\stat$, respectively. The mutual information
\begin{align}
I(\bY_{\stat 1};\bX_\stat) & = \sum_{i=1}^{\Nn}
I(\by_{\stat i}^{\prime};\bx_{\stat i}) \\ & = \Nn \log\det\bigg(\bI_{\Nc} +
c_{\tau}\, \bH_\stat \bH_\stat^{H}\bigg),
\end{align} 
which implies that the effective SNR for the static user is
\begin{equation}
\SNRbEff = c_{\tau}.
\end{equation}

In the following $T_\dyn-\Nn$ time-slots, the static user disregards
the received signal. The average rate achieved by the
static user is
\begin{equation}
R_\stat = \frac{\Nn}{T_\dyn} \,\Expt \bigg[\log\det\bigg(\bI_{\Nc} +
  \SNRbEff \, \bH_\stat \bH_\stat^{H}\bigg)\bigg], 
\end{equation}
where the expectation is over the channel realizations of $\bH_\stat$.

\subsection{Rate of the Dynamic User}
The dynamic user first estimates the equivalent channel and then
decodes its data. 
During the first $\Nn$ time-slots, the dynamic user receives the pilot
signal
\begin{align}
\bY_{\tau} & = \sqrt{c_{\tau} }\;\bH_\dyn \bX_\stat \bX_{\tau} +
\bW_{\tau} \\
& = \sqrt{c_{\tau} \Nc} \;\Heqd \bX_{\tau} +
\bW_{\tau},
\end{align}
where $\Heqd \in\mathcal{C}^{\Nn\times \Nn}$ is the equivalent channel
of the dynamic user
\begin{equation}
\Heqd \stackrel {\Delta}{=} \frac{1}{\sqrt{\Nc}}\,\bH_\dyn\bX_\stat
\end{equation}
Let $\tilde{h}_{ij}=[\Heqd]_{ij}$, then we have
$\Expt[\tilde{h}_{ij}]=0$ and
\begin{equation}
\Expt[\tilde{h}_{ij}\,\tilde{h}_{pq}^{H}] = 
\begin{cases}
1,\ \ \text{if} \ (i,j)=(p,q)\\
0, \ \ \text{else}
\end{cases},
\end{equation}
i.e., the entries of $\Heqd$ are uncorrelated and have zero-mean and
unit variance.

The dynamic user estimates $\Heqd$ by the MMSE. Let
\begin{align}
C_{YY}= ( 1+c_{\tau}\Nc) \bI_{\Nn},\quad C_{YH} =  \sqrt{c_{\tau} \Nc} \; \bX_{\tau}^{H}, 
\end{align}
we have
\begin{align}
\widehat{\bH}_\dyn & = \bY_{\tau} C_{YY}^{-1} C_{YH} \\ 
 & = \frac{ \sqrt{c_{\tau} \Nc} }{1+c_{\tau}\Nc} \bigg(\sqrt{c_{\tau}
  \Nc} \;\Heqd + \bW_{\tau} \bX_{\tau}^{H}\bigg)
\end{align}
Because $\bW_{\tau}$ has i.i.d. $\mathcal{CN}(0,1)$ entries, the
noise matrix $\bW_{\tau} \bX_{\tau}^{H}$ also has
i.i.d. $\mathcal{CN}(0,1)$ entries. 
Define $\hat{h}_{1ij}=[\widehat{\bH}_\dyn]_{ij}$. Then, we have
$\Expt[\hat{h}_{1ij}]=0$ and
\begin{equation}
\Expt[\hat{h}_{ij}\hat{h}_{pq}^{H}] = 
\begin{cases}
\alpha^2 ,\ \ \text{if} \ (i,j)=(p,q)\\
0, \ \ \ \ \ \text{else}
\end{cases},
\end{equation}
where
\begin{equation}
\alpha^2 \defeq  \frac{ c_{\tau} \Nc }{1+c_{\tau}\Nc}. \label{eq:deltaH1}
\end{equation}
In other words, the estimate of the equivalent channel has
uncorrelated elements with zero-mean and variance
$\alpha^2$.

During the remaining $T_\dyn-\Nn$ time-slots, the dynamic user regards
the channel estimate $\widehat{\bH}_\dyn$ as the true channel and decodes
the data signal. At the time-slot $i$, $\Nn<i\le T_\dyn$, the dynamic
user receives
\begin{align}
\by_{\dyn i} = \sqrt{c_{\delta} \Nc} \; \widehat{\bH}_\dyn \bx_{\dyn i} +
\underset{\bw_{\dyn i}^{\prime}}{\underbrace{ \sqrt{c_{\delta} \Nc} \;
    \widetilde{\bH}_{e} \bx_{\dyn i}
+ \bw_{\dyn i}}},
\end{align}
where $\widetilde{\bH}_{e} = \Heqd - \widehat{\bH}_\dyn$ is the 
estimation error for $\Heqd$, and $\bw_{\dyn i}^{\prime}$ is the equivalent noise that
has zero mean and autocorrelation
\begin{align}
\bR_{w^{\prime}_\dyn} & = c_{\delta} \Nc
\;\Expt\big[\widetilde{\bH}_{e}\widetilde{\bH}_{e}^{H}\big] +
\bI_{\Nn} \\
& = \big(1+\frac{c_{\delta}\Nn\Nc }{1+c_{\tau}\Nc}\big)\bI_{\Nn}. 
\end{align}
 The equivalent noise $\bw_{ \dyn i}'$ is uncorrelated with the signal
 $\bx_{\dyn i}$, because $\Expt[ \tilde{\bH}_{e} \bx_{\dyn i}
   \bx_{\dyn i}^H]
 =\Expt[ \tilde{\bH}_{e} ]  \Expt[ \bx_{\dyn i} \bx_{\dyn i}^H]  =0
 $. Therefore, from~\cite[Thm.1]{Hassibi2003}, the mutual information
 is lower bounded by:
\begin{align}
I(\by_{\dyn i};\bx_{\dyn i}|\widehat{\bH}_\dyn) & \ge \log\det\bigg(\bI_{\Nn} +
\frac{c_{\delta}\Nc\;\widehat{\bH}_\dyn\widehat{\bH}_\dyn^{H}}{1+c_{\delta}\Nn\Nc/(1+c_{\tau}\Nc)}\bigg)
\\
& = \log\det\bigg(\bI_{\Nn} +
\frac{c_{\delta}
  \alpha^2\Nc\;\overline{\bH}_\dyn\overline{\bH}_\dyn^{H}}{1+c_{\delta}\Nn\Nc/(1+c_{\tau}\Nc)}\bigg), \label{eq:MIR1}
\end{align}
where $\overline{\bH}_\dyn$ is the normalized channel whose elements have
unit variance
\begin{equation}
\overline{\bH}_\dyn = \frac{1}{\alpha} \widehat{\bH}_\dyn.
\end{equation}
From~\eqref{eq:MIR1}, the effective SNR for the dynamic user can be defined as
\begin{equation}
\SNRaEff =
\frac{c_{\delta}\alpha^2\Nn\Nc}{1+c_{\delta}\Nn\Nc/(1+c_{\tau}\Nc)}. \label{eq:Peff1}
\end{equation}
The average rate that the dynamic user achieves is
\begin{equation}
R_\dyn \ge (1-\frac{\Nn}{T_\dyn}) \Expt \big[\log\det(\bI_{\Nn} +
  \frac{\SNRaEff}{\Nn} \overline{\bH}_\dyn\overline{\bH}_\dyn^{H})\big], 
\end{equation}
where the expectation is over the dynamic user's channel realizations.

\section{Proof of Theorem~\ref{thm:rate2}}
\label{apdx:thm2}

We first show that if the codeword used by the dynamic user is
sufficiently long, the static user always decodes the dynamic user's
signal.

Similar to the dynamic user, the equivalent channel of the static user
$\Heqs \stackrel {\Delta}{=} \bH_\stat\bX_\stat/\sqrt{\Nc}$ can be estimated as 
$\widehat{\bH}_\stat\in \mathcal{C}^{\Nc\times\Nn}$ by using the pilot $\bX_{\tau}$. During time-slots $i =\Nn+1,\ldots, T_\dyn$,
the static user receives:
\begin{align}
\by_{\stat i} = \sqrt{c_{\delta} \Nc} \; \widehat{\bH}_\stat \bx_{\dyn
  i} +
\underset{\bw_{\stat i}^{\prime}}{\underbrace{ \sqrt{c_{\delta} \Nc}
    \; \widetilde{\bH}_{e} \bx_{\dyn i}
+ \bw_{\stat i}}},
\end{align}
where $\bx_{\dyn i} \in \mathcal{C}^{\Nn\times 1}$ is the $i$-th column of $X_\dyn$.
The mutual information
\begin{align}
I(\by_{\stat i};\bx_{\dyn i}|\widehat{\bH}_\stat) & \ge \log\det\bigg(\bI_{\Nc} +
  \frac{c_{\delta}\Nc\; \widehat{\bH}_\stat
    \widehat{\bH}_\stat^{H}}{1+c_{\delta}\Nn\Nc/(1+c_{\tau}\Nc)} 
\bigg) \\
& = \log\det\big(\bI_{\Nn} + \frac{\SNRaEff}{\Nn} \overline{\bH}_\stat\overline{\bH}_\stat^{H}\big),
\end{align}
where $\overline{\bH}_\stat = \frac{1}{\alpha}\widehat{\bH}_\stat$ is
the normalized channel estimate and $\rho_\dyn$ was given
in~\eqref{eq:SNRdynamic}. For the static user, the effective SNR for
decoding the dynamic signal is identical to that of the dynamic user.

The static user also experiences many
channel realizations over the dynamic user codewords. Write $\overline{\bH}_\stat =
[\overline{\bH}_{\stat 1} ; \overline{\bH}_{\stat 2}]$, where
$\overline{\bH}_{\stat 1} \in\mathcal{C}^{\Nn\times \Nn}$ and
$\overline{\bH}_{\stat 2} \in\mathcal{C}^{(\Nc-\Nn)\times \Nn}$. Then,
\begin{align}
\Expt \big[I(\by_{\stat i};& \bx_{\dyn i} |\widehat{\bH}_\stat)\big] \nonumber \\
 &\ge
\Expt\bigg[\log\det\bigg(\bI_{\Nn} + \SNRaEff
\big(\overline{\bH}_{\stat 1}\overline{\bH}_{\stat
  1}^{H}+\overline{\bH}_{\stat 2}\overline{\bH}_{\stat 2}^{H}
\big)\bigg)\bigg]\\
& \ge \Expt\bigg[\log\det\bigg(\bI_{\Nn} +
  \SNRaEff\overline{\bH}_{\stat 1}\overline{\bH}_{\stat
    1}^{H}\bigg)\bigg],\label{eq:bound1}\\
&= \Expt\bigg[\log\det\bigg(\bI_{\Nn} +
  \SNRaEff{\bH}_{\dyn}{\bH}_{\dyn}^{H}\bigg)\bigg],\label{eq:equiv}\\
&= R_\dyn
\end{align}
where~\eqref{eq:bound1} uses $\log\det(\bA+\bB)\ge \log\det \bA$ for
positive definite matrices $\bA, \bB$, and~\eqref{eq:equiv} uses the
fact that $\overline{\bH}_{\stat 1}$ has the same distribution as
$\overline{\bH}_\dyn$. Therefore the static user can decode the
dynamic user's signal, and from here on we assume the static user has
access to the dynamic user signal.


We now use the singular value decomposition of the dynamic signal
$\bX_\dyn = \bU_\dyn\mathbf{\Sigma}_\dyn\bV_\dyn^{H}$, where
$\bU_\dyn\in \mathcal{C}^{\Nn\times \Nn}$, $\bV_\dyn\in
\mathcal{C}^{T_\dyn \times \Nn}$ are unitary matrices, and
$\mathbf{\Sigma}_\dyn=\diag(\lambda_1,\cdots,\lambda_{\Nn})$. Then, we
have
\begin{align}
\bY_\stat^{\prime} & = \bY_\stat \bV_\dyn \mathbf{\Sigma}_\dyn^{-1} \\
& = \bH_\stat \bX_\stat \bU_\dyn + \bW_\stat\bV_\dyn\mathbf{\Sigma}_\dyn^{-1} \\
& \stackrel{\Delta}{=}  \bH_\stat \bX_\stat^{\prime} + \bW_\stat^{\prime}
\mathbf{\Sigma}_\dyn^{-1}, \label{eq:StaticDecodePost}
\end{align}
where $\bX_\stat^{\prime} = \bX_\stat \bU_\dyn,  \bW_\stat^{\prime}= \bW_\stat\bV_\dyn$.
Because $\bU_\dyn$, $\bV_\dyn$ are unitary, the entries of $\bX_\stat^{\prime},
\bW_\stat^{\prime} \in \mathcal{C}^{\Nc\times \Nn}$ remain
i.i.d. $\mathcal{CN}(0,1)$. Define $\by_\stat^{\prime} = \vec(\bY_\stat^{\prime})$,
$\bx_\stat^{\prime} = \vec( \bX_\stat^{\prime})$, $\bH_\stat^{\prime} =
\bI_{\Nn}\otimes \bH_\stat$ and 
\begin{align}
\bw_\stat^{\prime} & = \vec(\bW_\stat^{\prime}
\mathbf{\Sigma}_\dyn^{-1}) = \begin{bmatrix} \frac{1}{\lambda_1}
  \bw_{\stat 1}^{\prime} \\ \vdots
  \\  \frac{1}{\lambda_{\Nn}}  \bw_{\stat \Nn}^{\prime} \end{bmatrix}.
\end{align}
Then,
from~\eqref{eq:StaticDecodePost}, we write
$\by_\stat^{\prime}\in\mathcal{C}^{\Nn\Nc\times 1}$ as
\begin{equation}
\by_\stat^{\prime} = \bH_\stat^{\prime}  \bx_\stat^{\prime}  +
\bw_\stat^{\prime}.
\end{equation}
The mutual information
\begin{align}
I(\bY_\stat;\bX_\stat | \bH_\stat, \bX_\dyn) & =  I(\by_\stat^{\prime};\bx_\stat^{\prime} | \bH_\stat, \bX_\dyn)
\\
& = \log \det \bigg(\bI_{\Nn\Nc} + \bR_{w_\stat^{\prime}}^{-1} \ \bH_\stat^{\prime}\bH_\stat^{\prime\,H}\bigg),
\end{align}
where $\bR_{w_\stat^{\prime}} = \Expt[\bw_\stat^{\prime}
  \bw_\stat^{\prime\,H}]$ is the noise autocorrelation matrix that is given
by
\begin{equation}
\bR_{w_\stat^{\prime}} = \begin{bmatrix} \Expt[\lambda_1^{-2}]
  \bI_{\Nc} & & \\ & \ddots & \\ & & \Expt[\lambda_{\Nn}^{-2}]
  \bI_{\Nc} \end{bmatrix}.
\end{equation}

Therefore, the average rate attained by the static user is
\begin{align}
R_\stat  & = \frac{1}{T_\dyn} \Expt[I(\bY_\stat;\bX_\stat | \bH_\stat, \bX_\dyn)] \\
& =  \frac{1}{T_\dyn} \Expt\bigg[ \sum_{i=1}^{\Nn}\log \det \bigg(\bI_{\Nc} +
  \frac{1}{\Expt[\lambda_i^{-2}]} \bH_\stat\bH_\stat^{H}\bigg)\bigg] \\
& = \frac{\Nn}{T_\dyn} \Expt\bigg[ \log \det \bigg(\bI_{\Nc} +
  \frac{1}{\Expt[\lambda_1^{-2}]} \bH_\stat\bH_\stat^{H}\bigg)\bigg], 
\end{align}
where the last equality holds because the marginal distributions of
$\{\lambda_i\}$ are identical.

\bibliographystyle{IEEEtran} \bibliography{IEEEabrv,LiYang-revised2}

\end{document}